\documentclass[conference]{IEEEtran}

\usepackage[utf8]{inputenc}
\usepackage{verbatim}
\usepackage{optidef}

\usepackage{amsmath,amssymb,amsthm,amsfonts,mathrsfs}
\usepackage{tablefootnote}
\usepackage[margin=0.5in]{geometry}

\usepackage{hyperref} 

\usepackage{graphicx,color,cases} 
\usepackage{xfrac}  
\usepackage{xargs}

\usepackage[colorinlistoftodos,prependcaption]{todonotes}

\definecolor{DarkGreen}{rgb}{0.1,0.5,0.1}
\definecolor{DarkRed}{rgb}{0.5,0.1,0.1}
\definecolor{DarkBlue}{rgb}{0.1,0.1,0.5} 
  
\def\>{\rangle} 
\def\<{\langle}
\def\anc{|{\rm ancilla}\>}
\def\banc{\<{\rm ancilla}|}


\newtheorem{definitionenv}{Definition}
\newtheorem{lemmaenv}[definitionenv]{Lemma}
\newtheorem{theoremenv}[definitionenv]{Theorem}
\newtheorem{corollaryenv}[definitionenv]{Corollary}
\newtheorem{propositionenv}[definitionenv]{Proposition}
\newtheorem{factenv}[definitionenv]{Fact}
\newtheorem{conjectureenv}[definitionenv]{Conjecture}
\newtheorem{exampleenv}{Example}
\newtheorem{app-lemmaenv}[section]{Lemma}

\newenvironment{definition}{\begin{definitionenv}\rm}{\end{definitionenv}}
\newenvironment{lemma}{\begin{lemmaenv}\rm}{\end{lemmaenv}}
\newenvironment{theorem}{\begin{theoremenv}\rm}{\end{theoremenv}}
\newenvironment{corollary}{\begin{corollaryenv}\rm}{\end{corollaryenv}}
\newenvironment{example}{\begin{exampleenv}\rm}{\end{exampleenv}}
\newenvironment{fact}{\begin{factenv}\rm}{\end{factenv}}
\newenvironment{proposition}{\begin{propositionenv}\rm}{\end{propositionenv}}
\newenvironment{conjecture}{\begin{conjectureenv}\rm}{\end{conjectureenv}}
\newenvironment{app-lemma}{\begin{app-lemmaenv}\rm}{\end{app-lemmaenv}}

\newcommand{\bd}{\begin{definition}}
\newcommand{\ed}{\end{definition}}
\newcommand{\bl}{\begin{lemma}}
\newcommand{\el}{\end{lemma}}
\newcommand{\elp}{\hspace*{\fill} $\Box$
                 \end{lemma}}
\newcommand{\bt}{\begin{theorem}}
\newcommand{\et}{\end{theorem}}
\newcommand{\etp}{\hspace*{\fill} $\Box$
                 \end{theorem}}
\newcommand{\bc}{\begin{corollary}}
\newcommand{\ec}{\end{corollary}}
\newcommand{\ecp}{\hspace*{\fill} $\Box$
                 \end{corollary}}
\newcommand{\bcj}{\begin{conjecture}}
\newcommand{\ecj}{\end{conjecture}}

\newcommand{\be}{\begin{example}}
\newcommand{\ee}{\end{example}}
\newcommand{\eep}{\hspace*{\fill} $\Box$
                 \end{example}}
\newcommand{\bp}{\begin{proposition}}
\newcommand{\ep}{\end{proposition}}
\newcommand{\epp}{
                 \end{proposition}}









\newcommand{\tr}{\text{Tr}}

\newcommandx{\yellownote}[2][1=]{\todo[linecolor=yellow,backgroundcolor=yellow!25,bordercolor=yellow,#1]{#2}}
\newcommandx{\greennote}[2][1=]{\todo[inline,linecolor=olive,backgroundcolor=green!25,bordercolor=olive,#1]{#2}}

\setlength{\marginparwidth}{2cm}

\begin{document}
\title{Permutation-invariant quantum coding for quantum deletion channels}
\author{%
  \IEEEauthorblockN{Yingkai Ouyang}
  \IEEEauthorblockA{
Department of Electrical and Computer Engineering,\\
   National University of Singapore\\
   Email:  oyingkai@gmail.com}
}

\maketitle

\begin{abstract}
Quantum deletions, which are harder to correct than erasure errors, occur in many realistic settings. It is therefore pertinent to develop quantum coding schemes for quantum deletion channels. To date, not much is known about which explicit quantum error correction codes can combat quantum deletions. We note that {\em any} permutation-invariant quantum code that has a distance of $t+1$ can correct $t$ quantum deletions for any positive integer $t$ in both the qubit and the qudit setting. Leveraging on coding properties of permutation-invariant quantum codes under erasure errors, we derive corresponding coding bounds for permutation-invariant quantum codes under quantum deletions. We focus our attention on a specific family of $N$-qubit permutation-invariant quantum codes, which we call shifted gnu codes. The main result of this work is that their encoding and decoding algorithms can be performed in $O(N)$ and $O(N^2)$. 
\end{abstract}


\section{Introduction}

Quantum states are fragile and are extremely susceptible to noise. The type of noise encountered by quantum states depends very much on the underlying mechanism for decoherence. In the problem of quantum communication, quantum states are transmitted over a lossy quantum channel, and it is natural to expect that during this imperfect communication process, some of underlying qubits will be inadvertently lost.
This can arise from temporarily blockage, misalignment, or destruction of the transmitted signal.
When we know which of the transmitted qubits have been lost, we would have a quantum erasure channel \cite{bennett1997capacities,grassl1997codes},
from which error correction can be implemented based on the distance of the quantum code. However in realistic scenarios, we often do not know which qubits are lost, and this results in so-called quantum deletions, which is the focus of this paper.

Decades since Levenshtein first developed classical codes to combat deletions \cite{levenshtein1966binary}, there continues to be much interest in developing classical deletion codes \cite{WachterZeh2017TIT,kdeletion2020}.
In contrast, quantum coding schemes for deletions only been considered recently \cite{leahy2019quantum,nakayama2020first,HagiwaraISIT2020,nakayama2020single}.
In \cite{HagiwaraISIT2020}, the authors propose using a permutation-invariant four-qubit code for correcting for a single deletion error, and give explicit encoding and decoding circuits for it, and prove its optimality with respect to the number of qubits used. 
While the authors later generalized their theory for quantum coding for more general quantum codes to correct single deletions \cite{nakayama2020single}, extending their scheme to correct multiple deletions remains an open problem. 
In \cite{leahy2019quantum}, instead of directly considering quantum coding for deletions, the authors propose to use the methods of channel simulation to map deletions to erasures, and use conventional quantum codes thereafter. While this channel simulation method allows the correction of multiple deletions, it incurs additional communication overheads, and may not be robust against additional types of noise, such as amplitude damping errors which model energy loss in physical systems. 
The problem of designing quantum coding schemes that pertain directly to multiple deletions remains to be fully addressed.

In this paper, we propose to leverage on the existing theory of permutation-invariant quantum codes \cite{Rus00,PoR04,ouyang2014permutation,ouyang2015permutation,OUYANG201743,ouyang2019permutation} to address the problem of quantum coding for quantum deletions.
Permutation-invariant quantum codes are invariant under {\em any} permutation of the underlying particles, and have been studied in the context of quantum storage within the ground space of Heisenberg ferromagnets \cite{ouyang2019mems}, for robust quantum metrology \cite{ouyang2019robust}, 
and also for decoherence-free quantum communication on bosonic quantum buses \cite{ouyang2019permutation}.
Remarkably, for permutation-invariant quantum codes, deletions  are equivalent to erasures.
Hence it readily follows that {\em any} permutation-invariant quantum code that corrects $t$ erasures can also correct $t$ deletions. 

The main contribution of this work addresses the complete coding problem for permutation-invariant quantum codes with respect to deletion channels, and goes beyond code constructions. Indeed with just code constructions for permutation-invariant quantum codes, the coding problem remains incomplete without a full treatment of the encoding and decoding procedure. Aside for quantum coding schemes for single-deletions \cite{HagiwaraISIT2020,nakayama2020single}, not much is known for the general multiple deletion scenario.
To date, quantum encoding and decoding for permutation-invariant codes remains to be fully addressed. 

This paper aims to give a complete solution to the quantum coding problem for quantum deletion channels using permutation-invariant quantum codes. 
We achieve this by first showing how existing results related to the efficient preparation of permutation-invariant states can be combined with gate-teleportation to provide efficient encoding schemes for permutation-invariant codes.
Second, we derive efficient decoding algorithms for permutation-invariant codes capable of correcting multiple deletions.
The complete quantum coding scheme we propose will not only be advantageous to realize practical schemes to combat quantum deletions, but also help accelerate the development of quantum technologies based upon permutation-invariant quantum codes, such as quantum memories \cite{ouyang2019mems} and quantum sensors \cite{ouyang2019robust}.

Before we elaborate the details of the encoding and decoding permutation-invariant quantum codes, 
we briefly review the relevant theory of permutation-invariant quantum codes in Section \ref{sec:picode}, 
and review a particular family of such codes, known as gnu codes \cite{ouyang2014permutation}.
We consider a variation of these gnu codes derived in \cite{OUYANG201743}, which we call shifted gnu codes. 
We also leverage on coding properties of permutation-invariant quantum codes under erasures to derive bounds relating the number of correctible deletions $t$ and the length of quantum codes that directly correct quantum deletions. 
In Section \ref{sec:encoding} we show that the encoding of certain permutation-invariant quantum codes can be done in linear time.
In Section \ref{sec:decoding},
we explain how the decoding of shifted gnu codes can be done after $t$ deletions occur. This decoding procedure comprises of (1) quantum phase estimation, (2) probability amplitude rebalancing, and (3) gate teleportation, and can be achieved in quadratic time. 

\section{Permutation-invariant quantum codes and deletions}
\label{sec:picode}

Permutation-invariant quantum codes, by virtue of their symmetry, must reside within the symmetric subspace of any multi-particle system.
When each particle is a qudit of dimension $q$, the dimension of this symmetric subspace is $\binom {n+q-1} {q-1}$.
As mentioned in the introduction, for permutation-invariant quantum codes, deletions are equivalent to erasures. 
It therefore suffices to consider the usual notion of distance when considering the correction of deletion errors on permutation-invariant codes. We emphasize this fact in the following statement.
\begin{fact}
Having $t$ deletion errors is equivalent to having $t$ erasure errors on any permutation-invariant quantum code.
\end{fact}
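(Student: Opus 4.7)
The plan is to exploit the defining symmetry of permutation-invariant codes: every codeword $\rho$ on the $n$ underlying particles satisfies $\pi\rho\pi^\dagger=\rho$ for every permutation unitary $\pi$ of those particles. From this I would derive, as a one-line lemma, that for any two subsets $S,S'\subseteq\{1,\ldots,n\}$ of equal size $t$, the reduced states $\tr_S(\rho)$ and $\tr_{S'}(\rho)$ coincide on the $n-t$ surviving qudits (after the natural identification of those qudits via any permutation carrying $S$ to $S'$). Hence there is a single well-defined post-loss state $\sigma_t:=\tr_{[t]}(\rho)$ produced by removing \emph{any} $t$ of the particles.

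Next, I would formalise the two channels and compare their outputs. A $t$-qudit erasure channel at announced positions outputs $\tr_S(\rho)$ on the surviving qudits together with a classical register carrying $S$. A $t$-qudit deletion channel outputs $\sum_S p_S\,\tr_S(\rho)$ for some (possibly adversarial, possibly unknown) distribution $p_S$ over size-$t$ subsets, with no classical side information. By the symmetry lemma both channels output exactly the state $\sigma_t$; moreover, the extra classical label produced in the erasure case is statistically independent of $\rho$ and therefore conveys no information about the codeword.

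The equivalence of correctability then follows at once in both directions: any recovery map $\mathcal{R}$ that reconstructs $\rho$ from $\sigma_t$ works whether the missing positions are announced (erasure) or not (deletion). The only subtlety, which I would take care to handle, is that the argument must be blind to how the deleted positions are selected --- adversarially, uniformly at random, or otherwise; this is automatic because $\tr_S(\rho)$ depends only on $|S|$, so the adversary has no meaningful choice to make. The same reasoning carries over unchanged for qudits of arbitrary local dimension. The main obstacle is really just bookkeeping: pinning down the definitions of the two channels precisely enough that the symmetry reduction yields a literal equality of output states on the surviving particles; once that is done, the claim is an immediate consequence of permutation invariance.
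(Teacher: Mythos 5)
Your proposal is correct and follows essentially the same route as the paper: both arguments rest on the observation that permutation invariance makes $\tr_S(\rho)$ depend only on $|S|$, so deleting $t$ unknown particles yields the same post-loss state as erasing the first $t$ particles. The paper states this in one sentence; your version merely spells out the channel definitions and the recovery-map equivalence more explicitly.
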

\begin{proof}
Deletion errors occur when particles are lost and we do not know which particles are lost. Erasures occur when we lose particles, but do not know which particles are lost. Since permutation-invariant quantum codes are invariant under any permutation of the underlying particles, deletion of $t$ unknown particles is equivalent to the erasure of the first $t$ particles for permutation-invariant quantum codes, which proves Fact 1.
\end{proof}
From this, we can derive coding bounds for deletion channels by leveraging on known coding bounds on permutation-invariant quantum codes.

\begin{theorem}
\label{thm:codingtheorem}
For any $q \ge 2$, there are $q$-ary $N$ qudit quantum codes with $M$ logical codewords and correcting $t$ deletions whenever
\begin{align}
N \ge (t+1)^2 (M-1). \label{eq:codingbound}
\end{align}
\end{theorem}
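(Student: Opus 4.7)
The plan is to leverage Fact 1 to reduce the statement from deletion correction to erasure correction. Since any permutation-invariant code that corrects $t$ erasures has quantum distance at least $t+1$ by the Knill--Laflamme erasure conditions, and conversely a permutation-invariant code of distance $t+1$ corrects $t$ erasures, the theorem reduces to exhibiting a permutation-invariant $q$-ary code on $N$ qudits, encoding $M$ logical codewords, with distance at least $t+1$, whenever $N \ge (t+1)^2(M-1)$.

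To produce such a code I would invoke the gnu family of permutation-invariant codes from \cite{ouyang2014permutation} (also reviewed later in the paper as the shifted gnu codes of \cite{OUYANG201743}). These codes are parametrized by a triple of positive integers $(g,n,u)$, reside in the symmetric subspace of $gnu$ qubits, encode $M = u+1$ logical codewords, and have distance $\min(g,n)$. Setting $g = n = t+1$ and $u = M-1$ yields an $N$-qubit permutation-invariant code with $N = (t+1)^2(M-1)$, distance exactly $t+1$, and $M$ codewords, which by Fact 1 corrects $t$ deletions. For $N$ strictly larger than $(t+1)^2(M-1)$, one can scale $g$ (or $n$) upward beyond $t+1$, which only increases the distance while leaving $M$ unchanged, or one can inflate $u$ and restrict to an $M$-dimensional subspace of the resulting codespace.

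For the general qudit setting $q \ge 2$ I would observe that any qubit-level permutation-invariant code embeds naturally into the symmetric subspace of any $q$-dimensional qudit system by promoting each qubit to its first two levels; the embedded code remains permutation-invariant, preserves its logical dimension, and its distance under qudit erasures is at least its distance under qubit erasures. Consequently the bound $N \ge (t+1)^2(M-1)$ holds uniformly for every $q \ge 2$.

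The only substantive step, which I would simply cite rather than reprove, is the distance analysis for gnu codes: checking the Knill--Laflamme conditions for every erasure pattern of weight at most $\min(g,n)-1$ reduces to showing that the binomial moments of the weight distributions supporting distinct logical codewords agree up to order $t$ while their off-diagonal overlaps vanish. That calculation is the technical core of \cite{ouyang2014permutation,OUYANG201743}, and granted it, the parameter counting above proves the theorem.
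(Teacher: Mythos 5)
Your overall strategy --- reduce deletions to erasures via Fact~1 and then exhibit a permutation-invariant code of distance $t+1$ with $M$ logical codewords --- is exactly the route the paper takes; its proof is essentially a citation of \cite[Theorems 5.2 and 6.7]{OUYANG201743}. However, your parameter counting contains a genuine error. Gnu codes with parameters $(g,n,u)$ encode exactly \emph{one} logical qubit, i.e.\ $M=2$ codewords $|0_L\rangle$ and $|1_L\rangle$, for every value of $u$; the paper explicitly describes $u\ge 1$ as ``a scaling parameter that is independent of the code's distance,'' and its only effect is to lengthen the code from $gn$ to $gnu$ qubits. Your claim that these codes encode $M=u+1$ logical codewords is not supported by \cite{ouyang2014permutation}, so setting $g=n=t+1$ and $u=M-1$ produces a code of the right length $(t+1)^2(M-1)$ but with only two codewords. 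For $M>2$ the theorem rests on the genuinely different construction of \cite[Theorem 5.2]{OUYANG201743} (permutation-invariant codes built from polynomials), whose codespace dimension $M$ and length $N$ satisfy the stated inequality; that is the technical content the paper is leaning on and that your argument does not supply.

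Two smaller points. First, your treatment of lengths $N$ strictly above the threshold is incomplete: scaling $g$, $n$ or $u$ only reaches integers of the form $gnu$, not every $N\ge(t+1)^2(M-1)$; the cited construction (and, in the $M=2$ case, the $\Delta$-shifted gnu codes) is what covers all admissible $N$. Your fallback of restricting a larger code to an $M$-dimensional subspace does preserve permutation invariance and erasure correction, but you still need a code of the correct length to restrict. Second, your embedding of a qubit code into the first two levels of each $q$-ary qudit is sound --- permutation invariance and the Knill--Laflamme erasure conditions are inherited --- so that part of the argument would give the statement for all $q\ge 2$ once an $M$-codeword construction of every admissible length is actually in hand.
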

\begin{proof}
This follows directly from the constructed permutation-invariant quantum codes' ability to correct erasure errors in Ref. \cite[Theorem 6.7]{OUYANG201743} In particular, Ref. \cite[Theorem 5.2]{OUYANG201743} gives an explicit way to construct these codes for every $N$ that satisfies the inequality.
\end{proof}
When $t=1$ and $M=2$, the inequality \eqref{eq:codingbound} in Theorem \ref{thm:codingtheorem} is equivalent to $N \ge 4$. Since it was shown that the single-deletion permutation-invariant quantum code on four-qubits is optimal \cite{HagiwaraISIT2020}, it follows that the inequality \eqref{eq:codingbound} is tight for $t=1$ and $M=2$.

To correct two deletions, we can use the Pollatsek-Ruskai 7-qubit permutation-invariant quantum code \cite{PoR04} which has a distance of three. Hence for $t=2$, the inequality  \eqref{eq:codingbound} is not tight.

\subsection{Basis states}
To describe permutation-invariant quantum states on qubit systems, we use the Dicke basis comprising of quantum states called Dicke states.
Each Dicke state on $N$ qubits and of weight $w$ is a uniform superposition over classical codewords with a constant Hamming weight equal to $w$, and has the form
\begin{align}
|D^N_w\> = \frac{1}{\sqrt{\binom N w}} \sum_{\substack{ x_1,\dots, x_N \in \{0,1\} \\ x_1+\dots + x_N = w }} |x_1\>\otimes \dots \otimes |x_N\>.\label{eq:dicke-defi}
\end{align}
The Dicke states form an orthonormal basis of the qubit symmetric subspace, and satisfy the orthonormality conditions $\<D^N_u |D^N_v\> = \delta_{u,v},$ where $\delta_{u,v}$ denotes the Kronecker delta function. 
It is often convenient when working with symmetric states to work with the unnnormalized Dicke states given by
\begin{align}
|H^N_w\> = \sum_{\substack{ x_1,\dots, x_N \in \{0,1\} \\ x_1+\dots + x_N = w }} |x_1\>\otimes \dots \otimes |x_N\>.\label{eq:unnormalized-dicke-defi}
\end{align}
Clearly, we have $|H^N_w\> = \sqrt{\binom N w}|D^N_w\>$.
By applying the Vandermonde identity,
for any non-negative integer $t$ such that $t \le w\le N-t$, we clearly have
\begin{align}
        |H^N_{w}\> &= 
    \sum_{s=0}^{t} 
    |H^{t}_{s}\> \otimes |H^{N-t}_{w-s}\> . \label{eq:vandermonde}
\end{align}
This decomposition helps us unravel the structure of permutation-invariant quantum codes after they experience deletions.

\subsection{Gnu codes}
In Ref \cite{ouyang2014permutation},
a family of permutation-invariant quantum codes with three integer parameters $g,n$ and $u$ was introduced.
Intuitively, $g$ and $n$ correspond to the bit-flip and phase-flip distances respectively, and $u\ge 1$ is a scaling parameter that is independent of the code's distance.
More precisely, the distance of gnu codes is given by $\min\{g,n\}$.
These codes are called gnu codes because they comprise of $N = g n u$ qubits.

The corresponding logical codewords are 
\begin{align}
|0_L\> = \sqrt{2^{-(n-1)}} \sum_{\substack{0 \le j \le n \\ j\ {\rm even}} } 
\sqrt{\binom n j}
|D^{gnu}_{gj}\>	 ,\notag\\
|1_L\> = \sqrt{2^{-(n-1)}} \sum_{\substack{0 \le j \le n \\ j\ {\rm odd}} } 
\sqrt{\binom n j}
|D^{gnu}_{gj}\>.
\end{align}
 
Note that the optimal four-qubit permutation-invariant code proposed in Ref. \cite{HagiwaraISIT2020} that corrects a single deletion is a gnu code with $g=n=2$ and $u=1,$ because its logical codewords are
\begin{align}
|0_L\> &= \frac{|0000\>+|1111\>}{\sqrt 2}	 ,\quad
|1_L\> = |D^4_2\>.
\end{align}
To correct two deletion errors, we can use the nine-qubit Ruskai code \cite{Rus00} which has $g=n=3$ and $u=1$, and logical codewords
\begin{align}
|0_L\> &= 
\frac{|D^9_0\>+\sqrt 3|D^9_6\>}{2}	 ,\quad
|1_L\> = \frac{ \sqrt 3 |D^9_3\>+|D^9_9\>}{2} ,
\end{align}
In general, to encode one logical qubit into $N$ qubits while correcting $t$ deletions, it suffices to use gnu codes of length $N=(t+1)^2$, with $g=n=t+1$ and $u \ge 1$. 
Such gnu codes with $u=1$ saturate the bound in \eqref{eq:codingbound}.

When $u=1$ and $n$ is odd, the transversal operator $X^{\otimes N}$ is a logical bit-flip operation on the corresponding gnu code, where $X= |0\>\<1|+|1\>\<0|$ denotes the qubit bit-flip operator.

\subsection{Shifted gnu codes}

In Ref. \cite{OUYANG201743}, gnu codes were generalized to admit logical codewords of the form 
\begin{align}
|0_L\> = \sqrt{2^{-(n-1)}} \sum_{\substack{0 \le j \le n \\ j\ {\rm even}} } 
\sqrt{\binom n j}
|D^{gnu+\Delta}_{gj+\Delta}\>	 ,\notag\\
|1_L\> = \sqrt{2^{-(n-1)}} \sum_{\substack{0 \le j \le n \\ j\ {\rm odd}} } 
\sqrt{\binom n j}
|D^{gnu+\Delta}_{gj+\Delta}\>,
\end{align}
where the non-negative integer $\Delta$ denotes a shift.
We call these $\Delta$-shifted gnu codes.
In Ref. \cite{OUYANG201743}, it was shown that these $\Delta$-shifted gnu codes have 
a minimum distance of $\min\{g,n\}$ for any non-negative shift $\Delta$ and $u\ge 1$.

In this paper, we will restrict our attention to the quantum coding problem for shifted gnu codes for which 
$g=\Delta$, $n=2 \lfloor{g/2}\rfloor+1$, and $u=1+ 1/n$ so that 
$N = gn + 2g$.
These shifted gnu codes have logical codewords given by
\begin{align}
|0_L\> = \sqrt{2^{-(n-1)}} \sum_{\substack{0 \le j \le n \\ j\ {\rm even}} } 
\sqrt{\binom n j}
|D^{gn+2g}_{gj+g}\>	 ,\notag\\
|1_L\> = \sqrt{2^{-(n-1)}} \sum_{\substack{0 \le j \le n \\ j\ {\rm odd}} } 
\sqrt{\binom n j}
|D^{gn+2g}_{gj+g}\>.\label{eq:special-shifted-gnu}
\end{align}
The shifted gnu codes in \eqref{eq:special-shifted-gnu} also have a logical bit-flip operator that can be implemented transversally by the operator
$X^{\otimes N}$. This is because $\binom n j = \binom n {n-j}$ combined with the fact that 
\begin{align}
X^{\otimes N} |D^{gn+2g}_{gj+g}\>
=|D^{gn+2g}_{g(n - j)+g}\>
\end{align}
and that $n-j$ has a different parity from $j$ for odd $n$.
The logical $Z = |0\>\<0| - |1\>\<1|$ operator can be implemented by $R^{\otimes N}$ where
\begin{align}
R = |0\>\<0| + e^{\pi i /g} |1\>\<1|,
\end{align}
because 
\begin{align}
R^{\otimes N}|D^N_{gj + g}\> &= |D^N_{gj + g}\>,& \mbox{$j$ even}\\
R^{\otimes N}|D^N_{gj + g}\> &=- |D^N_{gj + g}\>,& \mbox{$j$ odd	}.
\end{align}

An explicit example of a shifted gnu code of the form of \eqref{eq:special-shifted-gnu} that corrects two deletions is a variant of the Ruskai code, which is a 15-qubit code with logical codewords 
\begin{align}
|0_L\> =  \frac{|D^{15}_{3}\> +\sqrt 3 |D^{15}_{9}\>}{2} 	 ,\quad
|1_L\> = \frac{\sqrt 3|D^{15}_{6}\> + |D^{15}_{12}\>}{2} 	 .
\end{align}
In this paper, we focus on the quantum coding problem for shifted-gnu codes of the form of \eqref{eq:special-shifted-gnu}.

\section{Encoding schemes}
\label{sec:encoding}

Here, we treat the problem of encoding an arbitrary qubit 
$a|0\> + b|1\>$ into a logical qubit 
$a|0_L\> + b|1_L\>$ of gnu codes and shifted gnu codes that have $X^{\otimes N}$ as their logical bit-flip operator.

The first step of the encoding scheme is to prepare the state $|0_L\>$. Since $|0_L\>$ is a permutation-invariant quantum state, we can use {\em any} scheme for preparing permutation-invariant states, which can be done for instance in the quantum circuit model in $O(N^2)$ steps \cite{bartschi2019deterministic},
using superconducting charge qubits in a ultrastrong coupling regime in $O(N)$ steps \cite{init-picode-2019-PRA}, 
and also using geometric phase gates in $O(N)$ steps \cite{johnsson2020geometric}.

Once the logical codeword $|0_L\>$ is prepared, we can use a logical version of gate teleportation \cite[Eq. (5)]{ZLC00} as shown in Fig \ref{fig:encoding}.
\begin{figure}[htp]
\centering 
\includegraphics[width=0.35\textwidth]{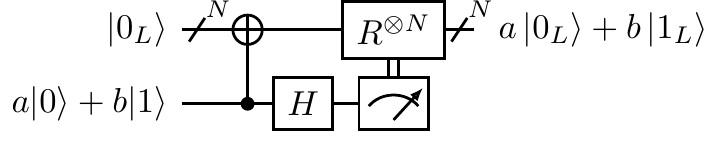}  
\caption{A quantum circuit that teleports an unencoded qubit $a|0\> + b|1\>$ to a logical qubit $a|0_L\> + b|1_L\>$ in a gnu code or a shifted gnu code.
For this scheme to work, it suffices to have $X^{\otimes N}$ to implement a logical bit-flip on the code.} 
\label{fig:encoding}
\end{figure} 
In this gate teleportation protocol, the CNOT from the control encoding qubit to the permutation-invariant quantum state applies $N$ two-qubit CNOT gates, where each two-qubit CNOT gate has its control on the encoding qubit and its target on each qubit in the permutation-invariant quantum state.
Here $H = (X + Z)/\sqrt 2$ denotes the Hadamard gate on a single qubit, where 
$Z = |0\>\<0|  - |1\>\<1|$ denotes a qubit phase-flip operation,
and the measurement of the control basis is done in the basis $\{|0\>,|1\>\}$.
If the measurement obtains $|0\>$ as its outcome, nothing is done on the permutation-invariant state. Otherwise, the transversal gate $R^{\otimes N}$ is applied on the permutation-invariant quantum state.
The output of the quantum circuit in Fig \ref{fig:encoding} yields the desired logical state
$a|0_L\> + b|1_L\>$.

The computation cost of implementing Fig \ref{fig:encoding} is linear in $N$. Namely, it requires $N$ two-qubit CNOT gates, one single-qubit Hadamard gate, one measurement of a single-qubit and $N$ applications of $R$ to give a total gate count of $2N+2$.
Since the state preparation of $|0_L\>$ can be done in $O(N)$ steps, we arrive at the following result.
\begin{theorem}
The encoding circuits of gnu codes and shifted gnu codes with $X^{\otimes N}$ as their logical bit-flip operator can be encoded in $O(N)$ steps.
\end{theorem}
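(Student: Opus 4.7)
The plan is to establish the theorem via a two-stage encoding protocol whose total gate count is $O(N)$. The first stage prepares the permutation-invariant logical state $|0_L\rangle$ on $N$ qubits, and the second stage uses the gate teleportation circuit of Fig.~\ref{fig:encoding} to transfer an arbitrary unencoded qubit $a|0\rangle + b|1\rangle$ onto the logical space.

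For the first stage, I will invoke one of the existing deterministic symmetric-state preparation routines already cited in the paragraph preceding the theorem statement (such as the superconducting charge qubit scheme or the geometric phase gate construction), each of which produces an arbitrary fixed superposition of Dicke states in $O(N)$ steps. Since $|0_L\rangle$ is, by its definition in \eqref{eq:special-shifted-gnu}, a fixed linear combination of Dicke states supported entirely on the symmetric subspace, these routines apply directly and output $|0_L\rangle$ in linear cost.

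For the second stage, I will verify that the circuit of Fig.~\ref{fig:encoding} outputs $a|0_L\rangle + b|1_L\rangle$ on both measurement outcomes. The key ingredients are: (i) the transversal CNOT with the unencoded qubit as control and each of the $N$ code qubits as target sends $|0\rangle \otimes |0_L\rangle \mapsto |0\rangle \otimes |0_L\rangle$ and, by the hypothesis that $X^{\otimes N}$ is a logical bit-flip, $|1\rangle \otimes |0_L\rangle \mapsto |1\rangle \otimes |1_L\rangle$; (ii) applying a Hadamard to the control followed by a computational-basis measurement realizes the usual teleportation-style projection, producing $a|0_L\rangle + b|1_L\rangle$ when the outcome is $|0\rangle$ and $a|0_L\rangle - b|1_L\rangle$ when the outcome is $|1\rangle$; (iii) the Pauli by-product in the latter case is a logical $Z$, which is implemented transversally by $R^{\otimes N}$ as already derived in Section~\ref{sec:picode}. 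Tracking amplitudes through these three steps yields the desired logical state in both branches.

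The final step is a direct gate count: $O(N)$ for the state preparation, $N$ transversal CNOTs, one Hadamard, one single-qubit measurement, and at most $N$ applications of $R$, giving a total of at most $O(N) + 2N + 2 = O(N)$ operations. I do not foresee any genuine obstacle; the argument merely assembles three well-established ingredients (linear-cost symmetric-state preparation, the transversality of $X^{\otimes N}$ and $R^{\otimes N}$ on the code, and the standard gate-teleportation identity) and checks that they compose correctly on the code subspace.
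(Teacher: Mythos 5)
Your proposal is correct and follows essentially the same route as the paper: linear-cost preparation of $|0_L\rangle$ via the cited non-circuit-model schemes, followed by the gate-teleportation circuit of Fig.~\ref{fig:encoding} using the transversality of $X^{\otimes N}$ and $R^{\otimes N}$, with the same $2N+2$ gate count for the teleportation stage. The only difference is that you spell out the teleportation identity explicitly where the paper simply cites it, which is a harmless (indeed helpful) elaboration.
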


\section{Decoding}
\label{sec:decoding}

If a quantum code satisfies the celebrated Knill and Laflamme quantum error correction criterion \cite{KnL97} exactly, we can use the procedure outlined in \cite{KnL97} and \cite{LNCY97} to construct explicit recovery maps for the quantum code. 
Furthermore, these recovery maps can be decomposed into two steps.
In the first step, a projective measurement is done, projecting the erroneous quantum state into disjoint correctible subspaces.
In the second step, a conditional unitary operation brings the correctible subspace back to the original subspace.
While this Knill-Laflamme recovery procedure gives formal treatment of quantum error correction in the language of matrix algebra, it does not tell us how we may realize these operations in practice.

In this section, we show that decoding shifted gnu codes can be achieved in polynomial time.
Before we describe how the decoding scheme proceeds, let us examine what happens after an initial quantum state $|\psi\> = a|0_L\> + b|1_L\>$ encounters a quantum deletion / erasure channel $\mathcal D_t$.
As an initial step, we let $\tr_t$ denote the partial trace of the first $t$ particles of an $N$ qudit state, and evaluate $\tr_t(|H^N_{gj}\>\<H^N_{gk}|)$ in the following lemma, reminiscent of \cite[Eq (III.4)]{ouyang2019robust}.
\begin{lemma}
\label{lem:intermediate-partial-trace}
Let $t$ be non-negative integer and $N$ be a positive integer. Let $u$ and $v$ be integers such that $t \le u,v \le N-t$.
Then for any $N$-qudit operator $|H^N_{u}\>\<H^N_{v}|$, we have 
\begin{align}
        \tr_t \left( |H^N_{u}\>\<H^N_{v}| \right) &= 
    \sum_{s=0}^t 
 \binom t s    |H^{N-t}_{u-s}\>  \<H^{N-t}_{v-s}|. 
\end{align}
\end{lemma}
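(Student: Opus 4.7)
The plan is to prove the lemma by directly expanding each unnormalized Dicke state on $N$ qudits into a sum of tensor products of Dicke states supported on the first $t$ and the last $N-t$ qudits respectively, then computing the partial trace term by term.

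First I would invoke the Vandermonde-based decomposition given in \eqref{eq:vandermonde}. The hypothesis $t \le u, v \le N-t$ guarantees the full range $s = 0,\dots, t$ is admissible for both kets, yielding
\begin{align}
|H^N_u\>\<H^N_v|
=
\sum_{s=0}^{t}\sum_{s'=0}^{t}
\bigl(|H^t_s\>\<H^t_{s'}|\bigr)
\otimes
\bigl(|H^{N-t}_{u-s}\>\<H^{N-t}_{v-s'}|\bigr).
\end{align}
Taking $\tr_t$ pulls the partial trace onto the first tensor factor, leaving
\begin{align}
\tr_t\bigl(|H^N_u\>\<H^N_v|\bigr)
=
\sum_{s,s'=0}^{t}
\<H^t_{s'}|H^t_s\>
\,|H^{N-t}_{u-s}\>\<H^{N-t}_{v-s'}|.
\end{align}

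Next I would evaluate the scalar factor using the relation $|H^t_s\> = \sqrt{\binom{t}{s}}\,|D^t_s\>$ and the orthonormality of the Dicke basis, which gives $\<H^t_{s'}|H^t_s\> = \binom{t}{s}\,\delta_{s,s'}$. Substituting this collapses the double sum to a single sum over $s$, producing exactly
\begin{align}
\tr_t\bigl(|H^N_u\>\<H^N_v|\bigr)
=
\sum_{s=0}^{t}\binom{t}{s}\,|H^{N-t}_{u-s}\>\<H^{N-t}_{v-s}|,
\end{align}
as claimed.

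I expect no serious obstacle: the only non-trivial ingredient is the Vandermonde-style splitting \eqref{eq:vandermonde}, which is already stated in the paper, and after that the argument is a routine bookkeeping exercise exploiting orthogonality of Dicke states on the traced-out register. The mildest care needed is to confirm that the index ranges for $u-s$ and $v-s$ remain in $\{0,\dots,N-t\}$ so that the resulting kets $|H^{N-t}_{u-s}\>$ and $|H^{N-t}_{v-s}\>$ are well-defined, which is immediate from $t \le u, v \le N-t$.
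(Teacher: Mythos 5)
Your proposal is correct and follows essentially the same route as the paper's proof: expand both $|H^N_u\>$ and $\<H^N_v|$ via the Vandermonde decomposition \eqref{eq:vandermonde}, trace out the first $t$ qudits, and use $\<H^t_{s'}|H^t_s\> = \binom{t}{s}\delta_{s,s'}$ to collapse the double sum. No gaps.
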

\begin{proof}
Under the assumption of the lemma,
we can use \eqref{eq:vandermonde} to get
\begin{align}
    |H^N_{u}\> = 
    \sum_{s=0}^t 
    |H^t_s\>
    \otimes 
    |H^{N-t}_{u-s}\>.\notag
\end{align}
Now we take the partial trace on the first $t$ qubits of the operator 
$|H^N_{u}\>\<H^N_{v}|$.
Using the cyclic property of the trace on the first $t$ qubits, the operator becomes
\begin{align}
&
\sum_{s,s'=0}^t 
    \<H^t_{s'}|H^t_s\>
    |H^{N-t}_{u-s}\>
    \<H^{N-t}_{v-{s'}}|=
\sum_{s=0}^t 
\binom {t}{s}
    |H^{N-t}_{u-s}\>
    \<H^{N-t}_{v-s}|.\notag
\end{align} 
\end{proof}
Now note that for any permutation-invariant quantum state $|\psi\>$,
we have $\mathcal D_t(|\psi\>\<\psi|) = \tr_t (|\psi\>\<\psi|)$.
Furthermore, we find that 
when $|\psi\>$ resides in the codespace of an appropriate shifted gnu code, $\mathcal D_t(|\psi\>\<\psi|)$ admits the following spectral decomposition,
as shown in the following lemma.
\begin{lemma}
\label{lem:state-t-deletions}
Consider any $N$-qubit $\Delta$-shifted gnu code with parameters $g,n$ and $\Delta$ with logical codewords $|0_L\>$ and $|1_L\>$.
Let $t$ denote the number of deletions, and suppose that $t \le \Delta$ and $gn \le N-t$. 
Define the unnormalized quantum states
\begin{align}
|\pm_s\> = 
\sum_{j=0}^n (-1)^{j}
b_{j,s} |D^{N-t}_{gj+\Delta-s}\> ,
\end{align}
where
$b_{j,s} = \sqrt{
\binom n j 
\binom {N-t}{gj+\Delta-s}}
/\sqrt{2^n\binom {N}{gj+\Delta}}$.

Then for any $|\psi\> = \alpha|+_L\> + \beta|-_L\>$ in the codespace,
where $|\pm_L\> = (|0_L\> \pm |1_L\>)/\sqrt 2$,
after $t$ deletions, the state $|\psi\>\<\psi|$ becomes
\begin{align}
    D_t(|\psi\>\<\psi|)
    =
    \sum_{s=0}^t p_s |\phi_s\>\<\phi_s| \label{eq:deletion-convex-combination},
\end{align} 
where $p_s = \binom t s/ \<+_s|+_s\>$,
$|\phi_s\> = |\psi_s\>/\sqrt{\<+_s|+_s\>}$ and $|\psi_{s}\> = 
\alpha |+_s\> + \beta|-_s\>$.
\end{lemma}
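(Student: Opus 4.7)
The plan is to compute $\mathcal{D}_t(|\psi\>\<\psi|) = \tr_t(|\psi\>\<\psi|)$ directly, by expanding $|\psi\>$ in the Dicke basis and applying Lemma \ref{lem:intermediate-partial-trace} termwise. First I would rewrite the superposition basis $|\pm_L\> = (|0_L\>\pm|1_L\>)/\sqrt 2$ using \eqref{eq:special-shifted-gnu} as $|\pm_L\> = \sqrt{2^{-n}}\sum_j (\pm 1)^j \sqrt{\binom n j}|D^N_{gj+\Delta}\>$, so that
\begin{align*}
|\psi\> = \sqrt{2^{-n}}\sum_{j=0}^n \sqrt{\binom n j}\bigl(\alpha + (-1)^j\beta\bigr)|D^N_{gj+\Delta}\>
\end{align*}
is a superposition over Dicke states at the weights $\{gj+\Delta: 0\le j\le n\}$, and $|\psi\>\<\psi|$ is a double sum over $j,k$ of Dicke outer products at those weights.

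Next I would convert each Dicke state to its unnormalized form $|D^N_u\> = |H^N_u\>/\sqrt{\binom N u}$, apply Lemma \ref{lem:intermediate-partial-trace} to every outer product $|H^N_{gj+\Delta}\>\<H^N_{gk+\Delta}|$ (valid because the hypotheses $t\le\Delta$ and $gn\le N-t$ force $t\le gj+\Delta\le N-t$ for each $j$), and then convert the result back to normalized Dicke states. The resulting prefactor $\sqrt{\binom n j/2^n}\cdot\sqrt{\binom{N-t}{gj+\Delta-s}/\binom N{gj+\Delta}}$ is by inspection the coefficient $b_{j,s}$ in the statement, so after swapping the order of summation to pull $s$ outermost, the inner $j,k$ sum factorizes as a rank-one operator $|\psi_s\>\<\psi_s|$ with $|\psi_s\> = \sum_j b_{j,s}(\alpha+(-1)^j\beta)|D^{N-t}_{gj+\Delta-s}\> = \alpha|+_s\>+\beta|-_s\>$. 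This already yields $\mathcal{D}_t(|\psi\>\<\psi|) = \sum_{s=0}^t \binom t s \,|\psi_s\>\<\psi_s|$, and extracting the scalar $\<+_s|+_s\>$ from each summand rewrites it in the claimed form in terms of $|\phi_s\> = |\psi_s\>/\sqrt{\<+_s|+_s\>}$ and $p_s$.

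The main hurdle is not conceptual but bookkeeping: three families of binomials (from the codeword prefactors, from Lemma \ref{lem:intermediate-partial-trace}, and from the Dicke-to-unnormalized-Dicke conversions) must collapse into the single symbol $b_{j,s}$, and one must verify the weights form a genuine probability distribution rather than just a positive decomposition. This last check reduces via Vandermonde's identity $\sum_s \binom t s \binom{N-t}{gj+\Delta-s} = \binom N{gj+\Delta}$ (a direct consequence of \eqref{eq:vandermonde}) and $\sum_j \binom n j/2^n = 1$ to the statement that $\mathcal{D}_t(|\psi\>\<\psi|)$ has unit trace, as required for any post-deletion density operator.
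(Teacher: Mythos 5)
Your proposal follows essentially the same route as the paper's proof: expand $|\psi\>$ over unnormalized Dicke states, apply Lemma~\ref{lem:intermediate-partial-trace} termwise to the double sum $|H^N_{gj+\Delta}\>\<H^N_{gk+\Delta}|$, and reabsorb the binomial prefactors into $b_{j,s}$ so that the $j,k$ sum factorizes as $|\psi_s\>\<\psi_s|$, giving $\sum_s \binom t s |\psi_s\>\<\psi_s|$. The concluding Vandermonde/unit-trace sanity check is a harmless addition not present in the paper, but the core argument is identical.
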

\begin{proof}
By definition of shifted gnu states, we can write 
\begin{align}
    |\psi\>
    &= 
\sum_{j=0}^n
\frac{\sqrt{\binom n j}}{\sqrt{2^n}}
(\alpha + (-1)^j\beta )|D^{N}_{gj+\Delta}\>
\notag\\
&=
\sum_{j=0}^n
a_j
(\alpha + (-1)^j\beta )|H^{N}_{gj+\Delta}\>,
\end{align}
where $a_j = \sqrt{\binom n j}/\sqrt{2^n\binom {N}{gj+\Delta}}$.
Then we can write
\begin{align}
    |\psi\>\<\psi|
&=
\sum_{j,k=0}^n
a_j a_k
(\alpha + (-1)^j\beta )
(\alpha^* + (-1)^k\beta^* )
\notag
\\
&\quad \times |H^{N}_{gj+\Delta}\>
\<H^{N}_{gk+\Delta}|.
\end{align}
Since $gj+\Delta$ satisfy the requirements of Lemma \ref{lem:intermediate-partial-trace}, we get 
\begin{align}
\tr_t(|\psi\>\<\psi|)
    &=
 \sum_{j,k=0}^n
a_j a_k
(\alpha + (-1)^j\beta )
(\alpha^* + (-1)^k\beta^* )
\notag
\\
&\quad \times    
      \sum_{s=0}^t 
 \binom t s    |H^{N-t}_{gj+\Delta-s}\>  \<H^{N-t}_{gk+\Delta-s}|. 
\end{align}
Substituting $b_{j,s} = a_j   \sqrt{\binom {N-t}{gj+\Delta-s}}$,
we can see that 
\begin{align}
 \tr_t(|\psi\>\<\psi|)
    &=
\sum_{s=0}^t \binom  t s
 \sum_{j,k=0}^n
b_{j,s} b_{k,s}
(\alpha + (-1)^j\beta )
\notag\\
&\quad \times
 (\alpha^* + (-1)^k\beta^* )
|D^{N-t}_{gj+\Delta-s}\>  \<D^{N-t}_{gk+\Delta-s}|,
\end{align}
and hence
\begin{align}
 \tr_t(|\psi\>\<\psi|)
    &=
\sum_{s=0}^t \binom t s
|\psi_s\>\<\psi_s|,
\end{align}
from which the result follows.
\end{proof}
In the above lemma, we see that after $|\psi\>$ experiences $t$ deletions, it becomes a convex combination of $t+1$ disjoint orthonormal states $|\phi_s\>$.
We will show how to decode each $|\phi_s\>$ to  $a|0\>+b|1\>$.
\newline

\noindent{\bf Quantum phase estimation:-} The first step of the decoding scheme projects the mixed state 
$\mathcal D_t (|\psi\>\<\psi|)$
onto $|\phi_s\>\<\phi_s|$ and determines the corresponding syndrome $s$. 
For this, we extract the syndrome $s$ using quantum phase estimation on a unitary matrix 
$U = (R^2)^{\otimes N-t}$.
Denoting $w = \exp(2\pi i /g)$ as a $g$th root of unity, we find that 
\begin{align}
    U |\phi_s\> = w^s |\phi_s\>.
\end{align}
To extract the syndrome $s$, we create an ancilla qudit
\begin{align}
    \anc = \sum_{j=0}^{g-1}  |j\>/\sqrt g ,
\end{align}
and apply the controlled operator 
\begin{align}
 V =     \sum_{j=0}^{g-1} |j\>\<j| \otimes U^j
\end{align}
on the state
\begin{align}
    \anc \banc \otimes \mathcal D_t (|\psi\>\<\psi| ).
\end{align}
Since $V$ is a controlled $U^j$ gate, and $U$ is implemented in $N$ steps,
we count the complexity of implementing $V$ as $O(N \log N)$. Next observe that 
\begin{align}
    V \anc\otimes |\phi_s\>
    = \sum_{j=0}^{g-1} \frac{w^{js}}{\sqrt g} |j\> \otimes  |\phi_s\>.
\end{align}
Since the states
$\sum_{j=0}^{g-1} w^{js} |j\>/\sqrt g $ are quantum Fourier transforms of the states $|s\>$, we can obtain the computational basis states $|s\>$ by performing the inverse quantum Fourier transform and measuring in the computational basis $\{|0\>,\dots, |g-1\>\}$ for the ancillary qubit.
Doing so, we obtain the state $|\phi_s\>$ when the state $|s\>$ is obtained on the ancilla. 
The quantum Fourier transform and its inverse are standard quantum algorithms, and can be implemented efficiently \cite{NielsenChuang} in $O((\log g)^2) = o(N)$ steps.
In summary, this quantum phase estimation procedure implements a projective measurement on the symmetric subspace of $N-t$ particles, which filters Dicke states on their weights modulo $g$, and has a complexity of $O(N \log N)$.
\newline

\noindent{\bf Probability amplitude rebalancing:-} 
We now proceed to the second step in the decoding procedure.
The states $|\phi_s\>$ lie in the span of $|+_s\>$ and $|-_s\>$ and are supported on the Dicke states $\{|D^N_{gj + \Delta-s}\> : j=0,\dots, n\}$.
By denoting 
$|0_s\> = (|+_s\> + |-_s\>)/\sqrt 2 $
and 
$|1_s\> = (|+_s\> - |-_s\>)/\sqrt 2 $,
we can see that 
$|+_s\>$ and $|-_s\>$ are supported on 
$|D^N_{gj + \Delta-s}\>$ for which $j$ is even and odd respectively.
We consider a unitary operation $W$ that interchanges the Dicke states 
$|D^N_{gj + \Delta-s}\>$
and $|D^N_{g(n-j) + \Delta-s}\>$.
Note that 
$|0_s\>$ and $|1_s\>$ are not always invariant under the action of $W$.
To rectify this, we rebalance the probability amplitudes on the supports of $|0_s\>$ and $|1_s\>$.
We can achieve this by implementing an appropriate unitary operation $G_s$ that acts non-trivially on only the symmetric subspace, using the method of geometric phase gates which can implement any unitary operation on the symmetric subspace \cite{johnsson2020geometric} in $O(N^2)$ steps. 
This target space that we desire has the basis states 
$|0'_s\>$ and $|1'_s\>$ supported on $|D^N_{g(n-j) + \Delta-s}\>$ for even and odd $j$ respectively, and satisfies the identity $W|0'_s\> = W|1'_s\> $.
Hence if $|\phi_s\> = a|0_s\> + b|1_s\>$, we have
$G_s|\phi_s\> = |\phi'_s\>
= a|0_s'\> + b |1_s'\>$.
\newline

\noindent{\bf Gate teleportation:-} 
The third step of the decoding procedure brings the state $|\phi'_s\> = a|0_s'\> + b |1_s'\>$ to the qubit state $a|0\> + |1\>$ via the gate-teleportation procedure illustrated in Fig. \ref{fig:decoding}.
\begin{figure}[htp]
\centering 
\includegraphics[width=0.35\textwidth]{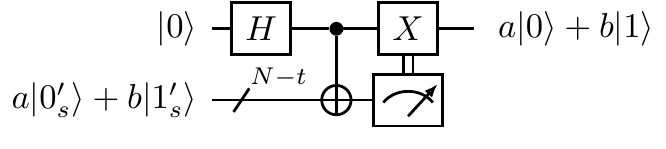}  
\caption{From a rebalanced state
$a|0'_s\> + b|1_s'\>$ with syndrome $s$,
we apply logical gate teleportation to recover the qubit state $a|0\>+ b|1\>.$} 
\label{fig:decoding}
\end{figure} 
To implement the gate teleportation procedure, we prepare an ancilla state in the $(|0\> +|1\>)/\sqrt 2$ state.
Next we perform a logical bit-flip operation on the state $|\phi'_s\>$ conditioned on the value of the ancilla qubit. To achieve this, 
we apply $N-t$ two-qubit CNOT with control qubit on the ancilla and target on each of the qubits in 
$|\phi'_s\>$.
When $N = gn + 2g,$ and $\Delta = g$, the action of this collective CNOT takes the Dicke states
$|D^{N-t}_{gj+g -s}\>$
to 
$|D^{N-t}_{g(n-j)+g -(t-s)}\>$.
Since $s \neq t-s$, this collective CNOT is not yet the logical CNOT on $|\phi'_s\>$. 
To obtain the logical CNOT,  
we need to subsequently apply the collective shift operator $S$ that collectively changes the weight of the entire Dicke basis by $t-2s$, that can be done in $O(N^2)$ time using geometric phase gates \cite{johnsson2020geometric}.
The complexity of this logical CNOT is thus $O(N^2)$.

Next we implement the measurement of the logical $Z$ operation on $|\phi'\>$. This logical $Z$ operation is $R^{\otimes (N-t)}$,
and hence we can measure the observable of $R$ on each qubit. This procedure takes $O(N)$ steps.
Finally, conditioned on the outcome of the logical $Z$ measurement, we apply the qubit bit-flip operation on the ancilla to complete the decoding procedure.

\section{Discussion}
\label{sec:discussion}

In summary, we discussed using permutation-invariant quantum codes to combat quantum deletions. Remarkably, the coding problem for quantum deletions on permutation-invariant quantum codes is equivalent to the coding problem for erasure errors. 
We focused our attention on a special family of permutation-invariant quantum codes, which we call shifted permutation-invariant quantum codes. We furthermore show that  encoding and decoding of such codes can be done in a linear and quadratic number of steps respectively. 
It remains an open problem to optimize the coding schemes presented here, and also to extend these coding schemes beyond the shifted permutation-invariant quantum codes.
It will also be interesting to see how we may correct other types of errors that occur in concert with quantum deletions using the shifted gnu codes proposed here.

 \section{Acknowledgements}\label{eq:acknow}
Y.O. is supported in part by NUS startup
grants (R-263-000-E32-133 and R-263-000-E32-731), and the
National Research Foundation, Prime Minister ’s Office, Singapore
and the Ministry of Education, Singapore under the
Research Centres of Excellence programme.

\bibliography{pidel}{}
\bibliographystyle{ieeetr}

\end{document}